\newcommand{\cl}[1]{\mathcal{#1}} 
\newtheorem{thm}{Theorem}
\newtheorem{prop}[thm]{Proposition}
\newtheorem{obs}[thm]{Observation}
\newtheorem{qst}[thm]{Question}
\newtheorem{exm}[thm]{Example}
\newcommand\blfootnote[1]{%
	\begingroup
	\renewcommand\thefootnote{}\footnote{#1}%
	\addtocounter{footnote}{-1}%
	\endgroup
}
\title{Sequential Voting with Confirmation Network\blfootnote{This project has received funding from the European Research Council (ERC) under the European Union's Horizon 2020 research and innovation programme (grant agreement $ n^o $ 740435).}}
\author{\qquad Yakov Babichenko \qquad\qquad\qquad Oren Dean\qquad\qquad\qquad Moshe Tennenholtz
	\email{\quad\quad yakovbab@tx.technion.ac.il \quad orendean@campus.technion.ac.il \quad moshet@ie.technion.ac.il}
	\institute{Faculty of Industrial Engineering and Management\\
		Technion---Israel Institute of Technology\\
		Haifa, ISRAEL}
}
\begin{document}
\maketitle

\begin{abstract}
	We discuss voting scenarios in which the set of voters (agents) and the set of alternatives are the same; that is, voters select a single representative from among themselves. Such a scenario happens, for instance, when a committee selects a chairperson, or when peer researchers select a prize winner. Our model assumes that each voter either renders worthy (confirms) or unworthy any other agent. We further assume that the prime goal of each agent is to be selected himself. Only if that is not feasible, will he try to get one of those that he confirms selected. In this paper, we investigate the open-sequential voting system in the above model. We consider both plurality (where each voter has one vote) and approval (where a voter may vote for any subset). Our results show that it is possible to find scenarios in which the selected agent is much less popular than the optimal (most popular) agent. We prove, however, that in the case of approval voting, the ratio between their popularity is always bounded from above by 2. In the case of plurality voting, we show that there are cases in which some of the equilibria give an unbounded ratio, but there always exists at least one equilibrium with ratio 2 at most.	
\end{abstract}

\section{Introduction}
Consider a committee voting to select a chairperson. Each committee member would like the honour of serving as chairperson himself. As a second best option he prefers one of several other members to win the position.\footnote{An equivalent situation arises when the committee is about to select a course of action (e.g., tenure a researcher) and each committee member is strongly associated with one of the alternatives.} The committee members' preferences profile can be represented by a \emph{confirmation network}, in which a direct edge indicates that the source of the edge confirms the target. In the confirmation network of Figure~\ref{fig: cabinet} member \#5 is the \emph{most popular} member --- he is supported by three other members, while everyone else is supported by at most one.\\
\begin{figure}[h]
	\centering
	\includegraphics[width={0.3\textwidth}]{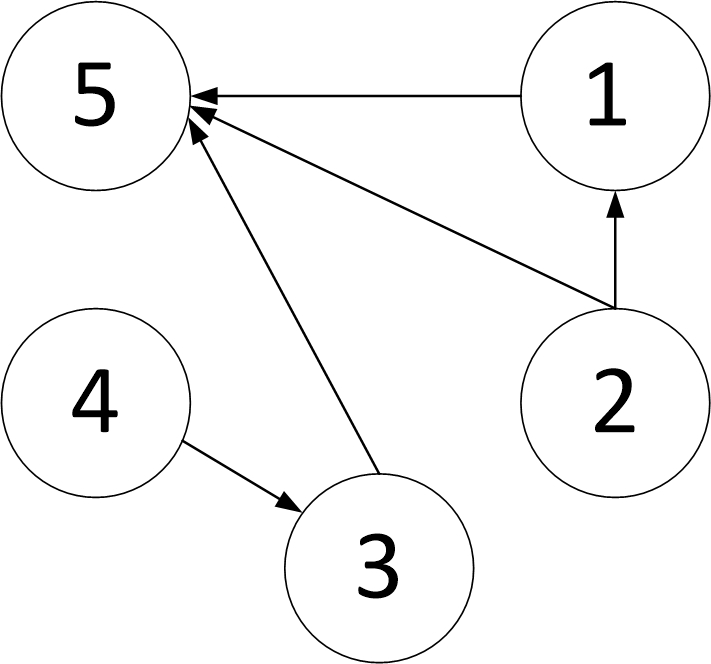}
	\caption{Five committee members and their confirmations.}
	\label{fig: cabinet}	
\end{figure}
As is the case in many committees, we assume the members are well-known to each other, and hence the confirmation network is known to everyone. The ballot is open and conducted sequentially. Since the committee members are always sited in the same places around the table, the voting order is prefixed and known. We consider two voting
methods: plurality, in which
each voter selects only one other member, and approval,
in which members vote for any subset of the other
members. In either voting method, a member is not allowed
to vote for himself, but he is allowed to abstain. The member with the most votes
is elected. Ties are broken by a predetermined and publicly known preference order.

Game-theoretically, we have a multi-stage game, describable as an \emph{extensive-form game} --- a tree with all possible voting-sequences, and an outcome at every leaf. The standard solution for this kind of game is a \emph{subgame perfect equilibrium} (SPE). To find an SPE, we start with the last voter. For every voting history, we assume this voter will choose a ballot which gives him a best outcome (notice that there may be more than one `best outcome'). Moving to the next-to-last voter, we know, for every voting history, how the last voter will respond to any of his ballots. Thus, we can find all his best votes and fix one of them for any sequence of voting history. We can continue this backward process until we select a best vote for all voters. Since voters are indifferent between their best votes, in general there can be many SPEs. 

We exemplify the model and its complexity in two scenarios. Example~\ref{exm: first} shows a case in which the most popular member is not elected in the unique SPE of plurality voting. The same scenario with approval voting leads to two different SPEs --- in one of them the most popular member is the winner. In the network of Example~\ref{exm: second} each member confirms at most one other member. Later (Proposition~\ref{prp: 1-regular}) we will see that under this condition the outcome is always `almost-optimal'. Nevertheless, Example~\ref{exm: second} shows that in the case of approval the outcome is not trivial: one of the members manages to get a better result by voting for someone he does not confirm.

\begin{exm}\label{exm: first}
	\emph{Assume that in the network of Figure~\ref{fig: cabinet} the voting order is lexicographic, and so is the tie-breaking order.\footnote{I.e., in case of a draw, the voter with lowest index wins.} We will show that in this case, we have a unique SPE for plurality voting, and a different unique SPE for approval voting. \emph{If the voting method is plurality}, and voter \#1 votes for \#5, then \#5 will be elected (even if voter \#2 decides to vote for \#1, voter \#3 cannot get elected himself and he will vote for \#5 which is his second best outcome). However, voter \#1 has a better vote. He may abstain, thus forcing voter \#2 to vote for him (if voter \#2 votes for \#5 then voter \#3 may abstain and get elected with the help of the vote from \#4 and the tie-breaking). Since \#3 and \#5 will not get more than one vote, voter \#1 now wins by tie-breaking. Thus by abstaining \#1 gets elected, even though \#5 is the most popular member having the most confirmations. Now, \emph{if approval is the voting method} and \#1 abstains, member \#2 may vote just for \#1 and get him elected as before. He may also vote for both \#1 and \#5. In this case \#3 still cannot get elected (due to the tie-breaking order) and will vote for his second-best outcome, namely \#5. We see that if voter \#1 abstains, voter \#2 has two `best ballots' (to vote for \#1 or to both \#1 and \#5) that lead to an outcome he confirms (\#1 or \#5). Later, when we formalize the model (Section \ref{sec: model}), we add a `truth-bias' assumption which states that each voter prefers the vote which is closest to his true confirmation set. Under this assumption member \#2 favours the vote \{\#1, \#5\} over just \{\#1\}. In this case, \#1 does not gain from abstaining; thus, using the `truth-bias' assumption once more, we get that \#1's best vote is to be truthful (i.e. vote for \#5). Everyone else will be truthful as well, and \#5 will be elected.}
\end{exm}

\begin{exm}\label{exm: second}
	\emph{Figure~\ref{fig: cabinet2} shows a network with four voters and at most one confirmation (outgoing edge) for each voter. The voting order and the tie-breaking order are both lexicographic. Since voters \#1 and \#2 cannot get elected (as no one confirms them), if they have a vote which gets the one they confirm elected (\#4 for voter \#1  and \#3 for voter \#2), then this will be their best vote. Otherwise, they will just be truthful (again we assume they are truth-bias) and vote for the one they confirm. \emph{In plurality voting} voter \#1 has no vote which gets \#4 elected, hence both \#1 and \#2 are truthful and \#3 is elected after abstaining. However, \emph{in approval voting}, \#1 can vote for both \#4 and \#2. Since \#2 precedes \#3 in the tie-breaking order, \#3 cannot be elected and will now vote for \#4. Thus, in approval voting, \#4 is elected. }
	\begin{figure}[h]
		\centering
		\includegraphics[width={0.3\textwidth}]{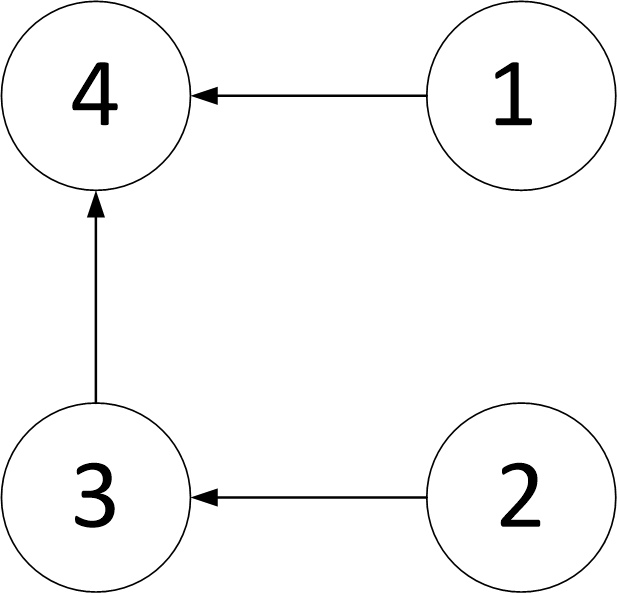}
		\caption{Four committee members and their confirmations.}
		\label{fig: cabinet2}	
	\end{figure}
\end{exm}

\subsection{Related work}
Voting systems and their limitations have been long studied as part of the broader field of social choice (see for example the classic book of Farquharson \cite{farquharson1969theory} and the more recent handbook, \cite{CSC16}). The classical voting model assumes that the sets of voters and alternatives are disjoint, and that each voter has a totally ordered preference over the alternatives. {Sequential voting} with this model has been studied before and showed to contain counter-intuitive `paradoxes'. Desmedt and Elkind \cite{DE10} considered both simultaneous and sequential plurality voting. They showed that a sequential voting system with at least three alternatives is prone to strategic voting, which might lead to an unexpected outcome, such as a Condorcet winner who does not win the election. Conitzer and Xia \cite{CX} further exemplified this phenomenon in a wide range of sequential voting systems, characterized by their domination index. In~\cite{Paradoxes} the authors showed even more extreme examples of strategic ballots in sequential voting systems.

A {confirmation network} as an underlying model for simultaneous voting has also been studied. Holzman and Moulin~\cite{ECTA:ECTA1291} took an axiomatic approach to show the possibilities and limitations of such electoral systems. The main requirement of such systems, in their paper, is that no voter will be able to manipulate the system to select him by delivering a dishonest, strategic ballot. Alon et al.~\cite{AFPT11} investigated the same model, and showed the impossibility of incentive-compatible (that is, `non-manipulable'), deterministic voting systems. They suggested a probabilistic system with a bounded ratio between the maximal in-degree and the expected in-degree of the elected agent. Further works with the same theme can be found in \cite{Bjelde:2017:ISP:3174276.3107922,Fischer:2014:OIS:2600057.2602836,Aziz:2016:SPS:3015812.3015872}.

\subsection{Our contribution}
In this paper, we discuss for the first time sequential voting with the underlying model of a confirmation network, for both plurality and approval voting. We show the limitations of these voting systems by demonstrating extreme cases in which an unpopular member is elected in an SPE. On the other hand, we prove upper bounds on the ratio between the maximal in-degree in the network and the in-degree of an SPE outcome.

\subsection{The model}\label{sec: model}
Let $ A $ be a set of agents. General social choice settings assume that each agent has ordinal preferences over all possible outcomes (i.e., an elected agent in our case). In such general settings it is arguable how to measure the quality of the elected agent. In this paper we restrict attention to a simplified setting where the ordinal preference of each agent has only three levels: each agent prefers himself, those he confirms are second, and those he does not confirm are last. We model the preferences of the agents using a directed graph $ G(A,E) $ where the interpretation of $ (x,y)\in E $ is that agent $ x $ confirms agent $ y $. In this setting we have a natural measure of the quality of the elected agent: the number of incoming edges. As we saw in Examples \ref{exm: first} and \ref{exm: second}, and will see in the results, even in this  simplified setting the strategic analysis is quite involved.

The agents cast their votes sequentially and openly. We consider two voting rules: \emph{plurality}, where each agent is allowed to vote for at most one other agent, and \emph{approval}, in which each agent may vote for any subset of the other agents (abstentions are allowed). The winner of the ballot is the one who receives the most votes, under a predetermined tie-breaking order. We assume $ A $ is ordered; this order is used both as the voting order and the tie-breaking order.\footnote{It is only for convenience that we assume that the voting order and tie-breaking order are the same. The proofs of the upper bounds on the ratio (Theorems~\ref{thm: pluraliry multiplicative bound} and~\ref{thm: approval multiplicative bound}) do not use this assumption. The negative examples can be altered to give the same outcome for different tie-breaking orders (this is not to say that the same example works for all tie-breaking orders).\label{ftn: order}} The utility of agent $ x $ from the outcome $ y $ is
\[ \cl{U}_x(y)=\begin{cases}
1,&y=x\\
1/2,&(x,y)\in E\\
0,& \text{otherwise}.
\end{cases} \]
There is nothing particular about this function; any three-level function will do. Actually, $ \cl{U} $ will not be explicitly used in the remainder of the paper.

We are interested in voting strategies that form a subgame perfect equilibrium (SPE). At least one SPE is guaranteed to exist~\cite{O03}; however, if in some subgame more than one `best vote' option is available to an agent, multiple SPEs exist, possibly with different outcomes. Such a situation can occur, for instance, when an agent does not confirm anyone and is also not confirmed by any other agent (i.e., the agent is an `isolated' node). If many agents are isolated, and so indifferent to the outcome, they may each make an arbitrary vote and anyone may be elected. To avoid this problem, we take the same approach as in~\cite{DUTTA2012154} (see also \cite{Meir2010,thompson2012,OMT13}). Namely, when an agent faces several best-votes, he will select the one which best reflects his true confirmations. In order to quantify the truth-bias assumption, we add the following bonus utility. Let $ f(x) $ be the number of agents that $ x $ confirms and actually votes for, and let $ g(x) $ be the number of agents he does not confirm and nevertheless votes for. Then his bonus utility is 
\[ \cl{B}_x=\epsilon^2 f(x)-\epsilon g(x), \]
where $ 0<\epsilon<1/2n $.\footnote {The intuition behind this function is that a voter would rather not report some or all of his confirmations than vote for someone he does not confirm. The upper bound on $ \epsilon $ is chosen so that the utility from the outcome always dominates the bonus utility.} When the outcome is $ y $, the actual utility of agent $x$ is given by $\cl{U}_x(y) +  \cl{B}_x$.

\subsection{Definitions and notations}

We will use the following notations from graph theory. For $ a\in A $ let $ d(a)=d_{in}(a):=\#\{b\in A:(b,a)\in E \} $ be the \emph{popularity} of $ a $. We denote by $ \Delta_{in}(G):=\max\limits_{a\in A}d(a)$ the maximum in-degree in $ G $. Similarly, $\Delta_{out}(G) $ is the maximum out-degree. An agent $ m $ is \emph{most popular} if $ d(m)=\Delta_{in}(G) $.

An agent is \emph{achievable} if there is an SPE in which he is elected. We denote by $ W\subseteq A $ the set of all achievable agents. Our goal in this paper is to find bounds on the difference and ratio between the highest/lowest popularity of achievable agents and the maximal popularity in the graph. Formally, for an achievable agent $ w\in W $, let $ G_{-E_w} $ be the graph we get from $ G $ after removing all the out-edges of $ w $. Let
\[ \cl{D}(w)=\Delta_{in}(G_{-E_w})-d(w),\quad \cl{R}(w)=\dfrac{\Delta_{in}(G_{-E_w})}{d(w)} \]
be the additive gap and the multiplicative ratio, respectively, between the highest popularity and the popularity of $ w $. We have two justifications for defining these measures on $ G_{-E_w} $ and not directly on $ G $. The first is philosophical: we do not want $ w $'s own confirmations to influence the way he is measured.\footnote{This argument relates to the notion of `incentive compatibility' which is central in \cite{AFPT11} and \cite{ECTA:ECTA1291}.} The second is mathematical elegance. We pay a small price in the definitions in order to get clearer theorems. It is obvious, though, that $ \Delta_{in}(G_{-E_w})\geq \Delta_{in}(G)-1 $; thus it makes little difference, especially for large values of $ \Delta_{in} $.

With a slight abuse of notation, we define for any graph $ G $ with plurality/approval voting, $ \cl{D}(G)=\min\limits_{w\in W}\cl{D}(w) $;\footnote{Our definition for $ \cl{D}(G) $ uses the best winner (and not the worst). This strengthens our results for the additive gap which are all negative.} for either plurality/approval let $ \cl{D}=\sup\limits_{G}\cl{D}(G) $.\footnote{For ease of notation we did not add a subscript to distinguish between $ \cl{D} $ of plurality and $ \cl{D} $ of approval. The results are the same for both anyway.} We shall promptly see that $ \cl{D} $ is unbounded for both plurality and approval voting. In order to give a better description of the limitations of the two voting methods, and to differentiate between them, we would like to chart the \emph{asymptotic} bounds of the multiplicative ratio when $ \cl{D}\to\infty $.\footnote{Note that if we define $ \cl{R} $ without this asymptotic then it will be predominated by small graphs with low values of $ \cl{D} $.}$^,$\footnote{Alternatively, we can take the asymptotic with respect to $\Delta_{in} \rightarrow \infty$.} To that end, we define for a graph $ G $,
\begin{flalign*}
\underline{\cl{R}}(G)=\min\limits_{w\in W}\cl{R}(w),\quad\overline{\cl{R}}(G)=\max\limits_{w\in W}\cl{R}(w),
\end{flalign*}
the maximal/minimal multiplicative ratio between the most popular agent and the winners. For any positive integer $ k $, we denote by $ \cl{G}_k $ the family of graphs with $ \cl{D}(G)\geq k $, and define\footnote{Again we do not have different notations for plurality and approval. It will be clear from the context to which of the two we refer.}$^,$\footnote{$ \underline{\cl{R}} $ and $ \overline{\cl{R}} $ are analogue to the `price of stability' and the `price of anarchy', respectively (see \cite{AGT07} Section 17.1.3).}
\[ \underline{\cl{R}}=\lim\limits_{k\to\infty}\sup\limits_{G\in \cl{G}_k}\underline{\cl{R}}(G),\quad \overline{\cl{R}}=\lim\limits_{k\to\infty}\sup\limits_{G\in \cl{G}_k}\overline{\cl{R}}(G).\]

\subsection{Main results and paper organization}
In Section~\ref{sec: additive gap}, we show a sharp transition of the additive gap. In networks where each agent confirms at most one other agent (i.e., the maximum out-degree is one) there is a unique outcome, and $ \cl{D}(G)$ is always zero (Proposition~\ref{prp: 1-regular}). However, already for networks where agents confirm at most two other agents (i.e., $\Delta_{out}=2$), $ \cl{D} $ is unbounded (Proposition~\ref{prp: max-deg 2}). In Section~\ref{sec: multiplicative ratio}, we prove bounds on $ \underline{\cl{R}}$ and $ \overline{\cl{R}} $. For approval voting we show that $1.5 \leq \underline{\cl{R}} \leq \overline{\cl{R}} \leq 2$ (Theorem~\ref{thm: pluraliry multiplicative bound}); whereas, for plurality voting we show that $\underline{\cl{R}}\leq 2$ and $\overline{\cl{R}}=\infty$ (Theorem~\ref{thm: approval multiplicative bound}). These results indicate that in worst-case scenarios approval voting succeeds in selecting more popular agents than plurality voting. In Section~\ref{sec: k-approval} we sketch a generalization of our results to $ k $-approval voting. We wrap up with a discussion and open problems in Section~\ref{sec: discussion}.

\section{Bounds on the additive gap}\label{sec: additive gap}
We start our discussion with the special case of graphs with a maximum out-degree of one (i.e. each agent confirms at most one other agent). In this case, we show that both plurality and approval voting have a unique winner in any SPE, and the winner is a most-popular agent or almost most-popular agent.\footnote{Meaning, that if we ignore his own confirmations, the elected agent is most popular.}
In approval voting, a vote of an agent to the set of his confirmed agents is called \emph{truthful}. In plurality voting, a vote of an agent to one of his confirmed agents (or abstention if he does not confirm anyone) is called \emph{truthful}.
Our proof is based on the following observation, which is a consequence of our truth-bias assumption.
\begin{obs}\label{obs:}
	In every SPE with outcome $w$, any agent who does not confirm $ w $ is truthful in the SPE path.\footnote{In fact, this simple observation holds even for a wider solution concept of Nash equilibria.}
\end{obs}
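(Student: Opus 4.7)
The plan is to fix an SPE with outcome $w$, pick any agent $x$ with $\cl{U}_x(w)=0$ (i.e.\ $x\neq w$ and $(x,w)\notin E$), and rule out a non-truthful ballot for $x$ on the SPE path via a one-shot deviation. Recall that the truthful ballot is the out-neighborhood of $x$ in approval, or any one of $x$'s confirmed agents (respectively, abstention when $x$ confirms nobody) in plurality. If $x$ actually plays some non-truthful $b$ on the path, I will compare $b$ against the truthful ballot $t$, where the continuation of play after $t$ is the SPE of the subgame induced by $t$.

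Concretely, letting $w'$ denote the outcome when $x$ switches to $t$, subgame perfection at $x$'s move demands
\[ \cl{U}_x(w)+\cl{B}_x(b) \;\geq\; \cl{U}_x(w')+\cl{B}_x(t). \]
Using $\cl{U}_x(w)=0$ this rearranges to $\cl{U}_x(w')\leq \cl{B}_x(b)-\cl{B}_x(t)$, so it suffices to verify that the right-hand side is strictly negative, contradicting $\cl{U}_x(w')\geq 0$.

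The bonus comparison is where the assumption $\epsilon<1/2n$ does its job. If $b$ contains any agent outside $x$'s out-neighborhood, then $g(x)\geq 1$ under $b$ and the $-\epsilon g$ term overwhelms every $\epsilon^{2}f$ contribution. Otherwise $b$ is a strict subset of the out-neighborhood (or, in plurality, an abstention when $x$ has at least one confirmation), so $g=0$ under $b$ but $f$ strictly drops compared with $t$, and the bonus falls by a positive multiple of $\epsilon^{2}$. In every case $\cl{B}_x(t)>\cl{B}_x(b)$, closing the contradiction.

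The only subtle point — and essentially the only place to be careful — is the framing of the deviation argument: the continuation after $x$'s deviation must itself be the SPE of the new subgame (not the original continuation), and the truth-bias bonus must be treated as an honest component of the payoff rather than as a post-hoc tiebreaker. Once that framing is in place the displayed inequality is just the standard best-response condition at $x$, and the observation follows immediately.
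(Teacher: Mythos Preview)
Your argument is correct and follows the same idea as the paper's one-sentence justification: since $\cl{U}_x(w)=0$ is already the worst possible main payoff for $x$, any non-truthful ballot is strictly dominated by the truthful one once the bonus term is taken into account. You have simply unpacked this into an explicit one-shot deviation inequality and a case analysis on the bonus, whereas the paper leaves those details implicit; your added care in excluding $x=w$ and in specifying that the post-deviation continuation is the SPE continuation is appropriate and matches the intended reading.
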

The reason is that the election of $ w $ is one of the worst outcomes for any agent who does not confirm $ w $; hence, being truthful is the only best vote for such an agent. 

\begin{prop}\label{prp: 1-regular}
	For the class of graphs with a maximum out-degree of one, both plurality and approval voting have a unique achievable outcome,\footnote{Though the outcome might be different between plurality and approval; see Example~\ref{exm: second}.} and for both plurality and approval $ \cl{D}=0 $.
\end{prop}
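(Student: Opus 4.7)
The plan is to show that in any SPE under maximum out-degree one the elected agent $w$ must be a most-popular agent and, moreover, the first such agent in the tie-breaking order, so the outcome is unique. First I would note that with max out-degree at most one, a truthful vote in either plurality or approval amounts to the same thing: cast a single vote for the (unique) confirmed agent, or abstain if there is none. This lets me handle both voting rules with one counting argument.

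Fix any SPE with winner $w$. By Observation~\ref{obs:}, every agent not confirming $w$ is truthful on the SPE path, so each such agent either abstains or casts its single vote for an agent other than $w$. Consequently $w$ receives at most $d(w)$ votes---only in-neighbours of $w$ can possibly cast a ballot for $w$. On the other hand, for every $m\neq w$, each of the $d(m)$ agents confirming $m$ does \emph{not} confirm $w$ (because out-degree is at most one), so by the same observation each of those agents votes truthfully for $m$; hence $m$ receives at least $d(m)$ votes. Since $w$ wins, $d(w)\ge(\text{votes to }w)\ge(\text{votes to }m)\ge d(m)$ for every $m$, so $d(w)=\Delta_{in}(G)$. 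When $d(m)=d(w)$ these inequalities pinch into equalities, so the tie-breaking order must favour $w$ over $m$; thus $w$ is the first most-popular agent in the tie-breaking order, a quantity independent of the SPE, which yields uniqueness of the achievable outcome.

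Finally, $\cl{D}(w)=0$ follows immediately: deleting $w$'s out-edges preserves $d(w)$, giving $\Delta_{in}(G_{-E_w})\ge d(w)=\Delta_{in}(G)$, while edge deletion never increases in-degrees, so the reverse inequality holds too. I expect the only delicate point will be cleanly justifying the upper bound on $w$'s vote count---one must rule out that a confirmer of $w$ could somehow divert extra votes toward $w$ in a way that inflates the tally beyond $d(w)$---but since only on-path ballots determine the outcome and Observation~\ref{obs:} pins down every non-confirmer of $w$ on-path, the counting argument goes through without having to characterise how the confirmers of $w$ themselves vote.
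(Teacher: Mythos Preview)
There is a genuine gap. Your counting step asserts that for every $m\neq w$, each of the $d(m)$ agents confirming $m$ is truthful on the SPE path, because none of them confirms $w$. That last clause is correct, but Observation~\ref{obs:} does \emph{not} apply to $w$ himself: the justification for the observation is that the election of $w$ is a worst outcome for a non-confirmer of $w$, whereas for $w$ it is the best outcome. Hence $w$ may well vote non-truthfully in an SPE that elects $w$. When $(w,m)\in E$, agent $w$ is one of the $d(m)$ confirmers of $m$ but need not vote for $m$, so $m$ is only guaranteed $d(m)-1$ votes, not $d(m)$. Example~\ref{exm: second} is exactly this situation: under plurality the winner is \#3 with $d(\#3)=1$, while $\Delta_{in}(G)=d(\#4)=2$; agent \#3 confirms \#4 but abstains. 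So your conclusion $d(w)=\Delta_{in}(G)$ is false in general, and with it your uniqueness characterisation (``$w$ is the first most-popular agent'') collapses.

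The paper's proof absorbs this one missing vote by proving only $|C_W|\ge |C_M|-\mathds{1}_{(w,m)\in E}$, which is precisely what $\cl{D}=0$ says once you pass to $G_{-E_w}$. For uniqueness the paper does \emph{not} try to name the winner combinatorially; it argues by backward induction that in every subgame the current voter's best action is uniquely determined (get himself elected if possible; else get his unique confirmed agent elected if possible; else be truthful by Observation~\ref{obs:}). Your argument can be repaired along the same lines: weaken the vote lower bound for $m$ to $d(m)-\mathds{1}_{(w,m)\in E}$, drop the claim that $w$ is most popular in $G$, and supply a separate uniqueness argument.
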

Before we prove this proposition, let us exemplify it in the scenario of Example~\ref{exm: second}. Notice that every node in the network of that example has either one out-edge or none, so the proposition applies. Indeed, we showed there that both plurality and approval have a unique SPE. In addition, in approval voting agent \#4 is elected, and he is the most popular. In plurality, agent \#3 is elected; notice, that if we remove his out-edge to \#4, then he becomes one of the most popular agents.
\begin{proof}
	We start by showing that the outcome is unique using backward induction. Given a subgame (i.e., a history of votes), if the agent which is about to vote has a vote which gets him elected, then this will be the outcome. Moreover, if he cannot get elected but he can get the one he confirms elected, then this would be the outcome (here we use the assumption that he confirms at most one agent). If he cannot get elected and cannot get the one he confirms elected, by Observation~\ref{obs:} his unique best action is to be truthful, and by induction, the outcome is determined uniquely. 
	
	Now fix an SPE. Let $ w $ be the winner of this SPE and let $ m\neq w $ be one of the most popular agents. We denote by
	\begin{flalign*}
	C_W:=\{a\in A: (a,w)\in E \};\;
	C_M:=\{a\in A: (a,m)\in E \};
	\end{flalign*}
	the set of agents which confirm $ w $ and those which confirm $ m $, respectively. By our assumption on the out-degree, $ C_M\cap C_W=\emptyset $. Thus, by Observation~\ref{obs:} the agents in $ C_M\backslash \{w\} $ are truthful. So $ m $ gets the votes of all those who confirm him, except perhaps the vote of $ w $. Again by Observation~\ref{obs:}, no agent in $ A\backslash C_W $ votes for $ w $, which means that $ w $ cannot get more votes than his in-degree. Since $ w $ is elected, we reach the conclusion that 
	\[ |C_W|\geq |C_M|- \mathds{1}_{(w,m)\in E}, \]

	and that is exactly the same as $ \cl{D}=0 $.
\end{proof}

The proof of Proposition~\ref{prp: 1-regular} can be generalized to subgames in which the remaining voters confirm at most one agent. Suppose we are in the middle of a voting process with graph $ G $. Let $ U\subset A $ be the voters who have not yet voted and let $ G' $ be the graph we get from $ G $ after removing the out-edges of vertices in $ A\backslash U $. Let $ \overline{s}=(s_1,\ldots,s_n) $ be the current \emph{scoring vector}.\footnote{That is, $ s_i $ is the number of votes agent $ i $ received from the voters in $ A\backslash U $.} We define the \emph{potential} of a vertex $ a\in A $ in this subgame to be $ \rho(a)=d_{in}(a,G')+s(a) $, where $ d_{in}(a,G') $ is the in-degree of $ a $ in $ G' $.\footnote{In other words, $ \rho(a) $ is the maximum number of votes $ a $ can hope to reach when the voting is done.} Let $ \cl{P}=\max\limits_{a\in A}\rho(a) $. 
\begin{prop}\label{cor: 1-regular}
	Using the definitions above, if $ \Delta_{out}(G')\leq 1 $ then there is a unique SPE for the remaining voting process; if $ w $ is the outcome of this SPE and $ m $ is any agent with $ \rho(m)=\cl{P} $, then
	\[ \cl{P}-\rho(w)\leq \mathds{1}_{(w,m)\in E(G')}. \]
\end{prop}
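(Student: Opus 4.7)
The plan is to replay the proof of Proposition~\ref{prp: 1-regular} in the subgame setting, with the potential $\rho(\cdot)$ playing the role of $d_{in}(\cdot)$ and the scoring vector $\overline{s}$ absorbing the contributions of already-cast votes.

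For the uniqueness of the continuation, I would induct backwards on the voters in $U$ in reverse voting order. At each voter's turn, the same trichotomy as in Proposition~\ref{prp: 1-regular} applies---either some vote elects him, or some vote elects the unique agent he confirms in $G'$, or by Observation~\ref{obs:} his sole best action is the truthful vote---and the hypothesis $\Delta_{out}(G') \leq 1$ together with the truth-bias bonus $\cl{B}$ pins down a single best vote in each case. The induction carries through because the continuation is already unique.

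For the inequality, fix the (unique) SPE winner $w$ and any $m$ with $\rho(m) = \cl{P}$; the case $m=w$ is immediate, so assume $m \neq w$. Let $C_W^U := \{a \in U : (a,w) \in E(G')\}$ and $C_M^U := \{a \in U : (a,m) \in E(G')\}$, which are disjoint by $\Delta_{out}(G') \leq 1$. By Observation~\ref{obs:}, every agent in $C_M^U \setminus \{w\}$ casts a truthful ballot for $m$ (and, since each such agent has out-degree at most one in $G'$, votes only for $m$), so $m$'s final tally is at least
\[ s(m) + |C_M^U| - \mathds{1}_{(w,m) \in E(G')} \;=\; \cl{P} - \mathds{1}_{(w,m) \in E(G')}. \]
Symmetrically, Observation~\ref{obs:} forbids any agent in $U \setminus C_W^U$ from voting for $w$, and $w$ cannot vote for himself, so $w$'s final tally is at most $s(w) + |C_W^U| = \rho(w)$. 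Since $w$ is declared winner (possibly via tie-breaking), his tally is at least $m$'s, and the two inequalities combine to give $\rho(w) \geq \cl{P} - \mathds{1}_{(w,m) \in E(G')}$, as required.

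The main subtlety---and the reason the indicator slack is tight rather than absent---is that $w$ himself need not vote truthfully in the SPE: if $(w,m) \in E(G')$, then $w$ may prefer to withhold his confirmation of $m$ in order to secure his own election, costing $m$ exactly one vote in the worst case. The hypothesis $\Delta_{out}(G') \leq 1$ is used twice: once to make $C_W^U$ and $C_M^U$ disjoint, and once to ensure each backward-induction step yields a unique truthful vote.
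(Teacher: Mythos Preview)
Your proposal is correct and follows precisely the approach the paper intends: the paper itself omits the proof of Proposition~\ref{cor: 1-regular}, stating only that it ``is very similar to that of Proposition~\ref{prp: 1-regular}'', and your argument is exactly that replay---potentials $\rho(\cdot)$ in place of in-degrees, the sets $C_M^U, C_W^U$ restricted to the remaining voters $U$, and the same appeal to Observation~\ref{obs:} and the disjointness forced by $\Delta_{out}(G')\leq 1$. Your explicit remark that $w$ himself is exempt from Observation~\ref{obs:} (and hence may withhold his vote from $m$) matches the paper's handling of $C_M\setminus\{w\}$ in the original proof.
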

\noindent We omit the proof which is very similar to that of Proposition~\ref{prp: 1-regular}. Proposition~\ref{cor: 1-regular} will be used in the proof of Proposition~\ref{prp: max-deg 2}.\\

In contrast to Proposition~\ref{prp: 1-regular}, we will now show that even for graphs with a maximum out-degree of two, $ \cl{D}$ is no longer bounded. In the proof, we will show a voting scenario in which voters who confirm both a very popular agent and a much less popular one are forced to vote only for the less popular.
\begin{prop}\label{prp: max-deg 2}
	For the class of graphs with a maximum out-degree of two, $ {\cl{D}} $ is unbounded, for both plurality and approval voting.
\end{prop}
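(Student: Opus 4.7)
The plan is to prove the statement by construction. For each integer $k\ge1$, I would exhibit a graph $G_k$ with $\Delta_{out}(G_k)\le 2$ in which every SPE under plurality (respectively approval) voting elects some agent $w$ with $\cl{D}(w)\ge k$; letting $k\to\infty$ then gives the unboundedness of $\cl{D}$ for both voting rules.

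The starting point is Example~\ref{exm: first}, whose gadget already yields an additive gap of $2$ within the out-degree-$2$ regime: the manipulator (voter~\#1) abstains and, through a single threat-agent (voter~\#3), forces voter~\#2 to cast his ballot for~\#1 instead of the most popular agent~\#5. The gap of~$2$ arises because two confirmers of~\#5 (namely~\#1 and~\#2) do not vote for~\#5 on the SPE path. To inflate this gap to $k$ I would attach $k$ disjoint threat-chain gadgets to a common core containing a low-popularity winner $w$ and a most-popular agent $m$. Each attached gadget adds one extra confirmer of $m$ and simultaneously supplies an unconfirmed third-agent whose potential election coerces one more voter (who confirms both $w$ and $m$) into voting strictly for $w$. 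Every vertex involved has at most two out-edges (one to $m$, one to its chain-partner), so the out-degree bound is preserved, and $d(m)$ grows linearly in $k$ while $d(w)$ stays bounded.

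Verification would proceed by backward induction. Observation~\ref{obs:} pins down the truthful votes of all non-confirmers of $w$ on the SPE path, so the only remaining strategic choices are those of the manipulator, the forced voters, and the threat-chain agents. Proposition~\ref{cor: 1-regular} applied to the terminal portion of the voting sequence --- where every remaining voter has out-degree at most~$1$ --- collapses the tail subgame to a single predictable outcome, reducing the problem to a short case analysis at each forced voter's turn. The analyses for plurality and approval differ only in the mechanics of truth-bias and follow the same template.

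The main obstacle is that the $k$ attached threat-chains must not interfere with one another through the common agent $m$. Naively piling confirmers onto $m$ would let the truthful votes of those extra confirmers swamp $w$'s tally and re-elect $m$; the role of each chain's third-agent threat is precisely to prevent this. Arranging the gadgets --- in particular, the voting and tie-breaking orders and the secondary out-edges of each forced voter --- so that each threat activates exactly when needed, without cross-contaminating the scoring vector of neighbouring gadgets, is the technical heart of the construction.
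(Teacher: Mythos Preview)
Your plan has a structural flaw that makes it unworkable, at least for approval voting. You propose a construction in which ``$d(m)$ grows linearly in $k$ while $d(w)$ stays bounded.'' But then the multiplicative ratio $\cl{R}(w)=\Delta_{in}(G_{-E_w})/d(w)$ would tend to infinity along your sequence. This directly contradicts Theorem~\ref{thm: approval multiplicative bound}, which shows that in approval voting \emph{every} SPE winner satisfies $\cl{R}(w)\le 2$. The same objection applies to your claim that \emph{every} SPE elects such a $w$ under plurality, since Theorem~\ref{thm: pluraliry multiplicative bound} guarantees the existence of at least one SPE with ratio at most~$2$; hence $\cl{D}(G_k)=\min_{w\in W}\cl{D}(w)$ cannot be forced above $d(w)$ when $d(w)$ is bounded and some achievable agent has popularity at least $d(m)/2$. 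In short, any correct construction must let the winner's popularity grow with~$k$; the gap must be a lower-order term, not the dominant one. This is exactly what the paper does: its winner $c_{k+1}$ has popularity $\tfrac{1}{2}k(k+1)-1$ and the most popular agent $c_1$ has popularity $\tfrac{1}{2}k(k+1)+k-1$, so the gap is $k$ while the ratio is $1+\Theta(1/k)$.

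A secondary issue: you cite Example~\ref{exm: first} as your base gadget for both rules, but the paper already explains that under approval with truth-bias voter~\#2 casts $\{\#1,\#5\}$ and~\#5 is elected --- the manipulation does not survive. So the gadget you intend to replicate $k$ times does not even produce a gap under approval. The paper's construction uses a different mechanism: a cascade among the $d_i$'s, each of whom confirms the intended winner $c_{k+1}$ \emph{and} one competitor $c_i$, and is forced to drop $c_i$ from her ballot (not to drop $m$ from it). The threats come from shifting which $c_j$ would win in the out-degree-$1$ tail, analysed via Proposition~\ref{cor: 1-regular}. You have identified the right verification tools (Observation~\ref{obs:} and Proposition~\ref{cor: 1-regular}), but the architecture of the graph --- who confirms whom, and in particular letting the winner's in-degree scale --- needs to be rethought.
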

\begin{proof}
	\begin{figure}[h]
		\centering
		\includegraphics[width={0.5\textwidth}]{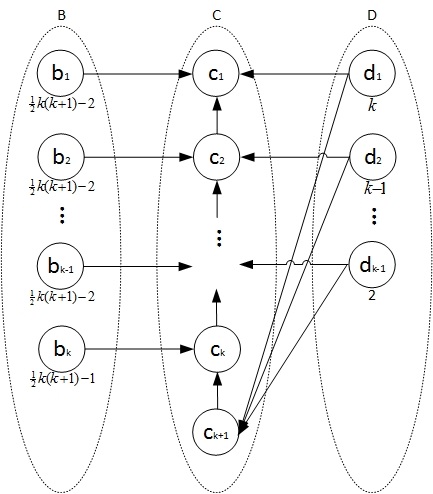}
		\caption{The graph $ G_k $. Agent $ c_1 $ has popularity $ \tfrac{1}{2}k(k+1)+k-1 $ while the winner, $ c_{k+1} $, has popularity $ \tfrac{1}{2}k(k+1)-1 $.}
		\label{fig: increasing gap}	
	\end{figure}
	We will build a series of graphs, $ \{G_k\} $, such that $ \forall k\geq 2 $, $\Delta_{out}(G_k)=2 $, and the unique achievable outcome, for both plurality and approval voting, has popularity $\Delta_{in}(G_k)-k $. Figure~\ref{fig: increasing gap} depicts the graph $ G_k $. The agents in $ B $ and $ D $ are classified by their types (the number of agents in each type is denoted below its circle).
	The order of the agents starts with the agents in $ D $ by lexicographic order of their type, then agents in $ C $ by reverse lexicographic order and finally the agents in $ B $ (i.e., the order is $ d_1,\ldots,d_{k-1},c_{k+1},\ldots,c_1,b_1,\ldots,b_k $).\footnote{As explained in Footnote~\ref{ftn: order}, we use the same order for voting and for tie-breaking.} \\
	Notice that by Observation~\ref{obs:}, the winner in any SPE must be from $ C $; otherwise, we will have a winner which got no votes, while $ c_1 $ gets at least the votes of the agents in $ b_1 $. Suppose we are in the subgame which starts right after the ballot of the last agent in $ D $. Since all the remaining voters have at most one out-edge, according to Proposition~\ref{cor: 1-regular} the winner must be an agent which will have the highest potential if he abstains. Since the tie-breaking order is the same as the voting order, the winner will be \emph{the first} agent which will have the highest potential if he abstains.
	
	Now, the agents in $ D $ confirm both $ c_{k+1} $ and one other agent. The point will be that the only best vote for these agents is to vote only for $ c_{k+1} $. Before proving the general case, we demonstrate this phenomenon in the simplest case, when $ k=2 $ (Figure~\ref{fig: increasing gap k=2}). Here, if the agents of type $ d_1 $ give $ c_1 $ at least one vote (e.g., if one votes for $ c_3 $ and the other for $ \{c_1,c_3\} $), then $ c_3 $ cannot be elected (since after abstaining his total votes will be at most two, and $ c_1 $'s potential is at least three). Therefore $ c_3 $ is truthful and $ c_2 $ abstains and wins (he will have two votes from $ B $ and one from $ c_3 $; agent $ c_1 $ may also have three votes, but $ c_2 $ precedes him in the tie-breaking order). This result is unfavourable for the agents of $ d_1 $. However, if the agents $ d_1 $ vote only for $ c_3 $, then $ c_3 $ can now abstain; having the same potential as $ c_1 $ and $ c_2 $, agent $ c_3 $ wins by the tie-breaking.
	\begin{figure}[h]
		\centering
		\includegraphics[width={0.5\textwidth}]{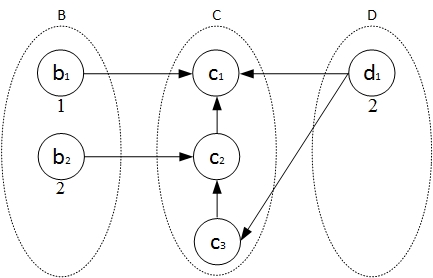}
		\caption{The graph $ G_2 $. }
		\label{fig: increasing gap k=2}	
	\end{figure}	
	
	Turning to the general case, assume first that all the agents in $ d_i $, $ 1\leq i\leq k-1 $ vote for $ c_i $, and perhaps also for $ c_{k+1} $ (in case of approval voting). Since after the ballots of the agents in $ D $, the agents $ c_3,\ldots, c_{k+1} $ all have a lower potential than $ c_1 $, while $ c_2 $ can abstain and have the highest potential, $ c_2 $ is the winner. This outcome is unfavourable for the voters of $ d_1 $. We claim that a better vote for them is to vote only for $ c_{k+1} $, since that leads to the election of $ c_{k+1} $.\footnote{To be more precise: each voter in $ d_1 $ considers the situation in his turn. If all the voters before him voted only for $ c_{k+1} $ then he sees the opportunity to make $ c_{k+1} $ elected. Otherwise, he cannot get a good result and is just truthful.} Indeed, if now all the voters in $ d_i $, $ 2\leq i\leq k-1 $ vote for $ c_i $, then now $ c_1,c_4,\cdots,c_{k+1} $ cannot be elected since $ c_2 $ will have a higher potential than theirs. However, $ c_3 $ can abstain and win by tie-breaking. Thus, the agents of $ d_2 $ are now dissatisfied. If they now all vote just for $ c_{k+1} $ the same reasoning continues and shows that now $ c_4 $ will be the winner unless all the agents of $ d_3 $ vote just for $ c_{k+1} $. Eventually, if all the voters of $ D $ vote just for $ c_{k+1} $ and $ c_{k+1} $ abstains, he will get elected. The agents of $ D $ are all satisfied with this outcome, which shows that this is an equilibrium. Indeed, our reasoning shows that this is the only equilibrium for both plurality and approval voting. The difference between the popularity of the winner, $ c_{k+1} $, and the most popular, $ c_1 $, is $ k $, which implies the claim of the proposition.
\end{proof}

\section{Bounds on the multiplicative ratio}\label{sec: multiplicative ratio}
In Section~\ref{sec: additive gap} we showed that in general, as much as we can tell from the additive gap measure, both plurality and approval voting systems perform poorly. Notice, however, that in the series of graphs in the proof of Proposition~\ref{prp: max-deg 2}, the maximum in-degree in the graph $ G_k $ is $ \Delta_{in}(G_k)=\Theta(k^2) $. This means that the ratio between the maximum in-degree and the popularity of the unique achievable outcome is $ 1+\Theta(k^{-1}) $. This raises the question whether a constant fraction of popularity can be guaranteed in sequential voting. We shall see in Theorems~\ref{thm: pluraliry multiplicative bound}~and~\ref{thm: approval multiplicative bound}, that the bounds of the multiplicative ratio are non-trivial and are quite different between plurality and approval voting.
\begin{thm}\label{thm: pluraliry multiplicative bound}
	In plurality voting, $ \underline{\cl{R}}\leq 2 $ and $ \overline{\cl{R}} $ is unbounded.
\end{thm}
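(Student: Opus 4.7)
The plan is to handle the theorem's two assertions separately, since they are largely independent.

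For the upper bound $\underline{\cl R}\leq 2$, I would fix an SPE with winner $w$, let $m$ realise the maximum in-degree in $G_{-E_w}$, and write $C_w,C_m$ for the in-neighbourhoods. Two easy bounds follow from Observation~\ref{obs:} and a direct truth-bias argument: $v_w\leq |C_w|=d(w)$, and the number $v_m$ of votes cast for $m$ is at most $|C_m|$, because any agent outside $C_m$ strictly loses (via truth-bias) by casting a ballot for $m$ without changing the outcome. The heart of the proof is the matching lower bound
\[ v_m\;\geq\;|C_m\setminus C_w| \]
in some SPE. Combined with $v_m\leq v_w$ (since $w$ wins) and $|C_m\cap C_w|\leq |C_w|=d(w)$, this gives $d(m)\leq 2d(w)$ and hence $\cl R(w)\leq 2$.

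To produce the lower bound on $v_m$ I would carry out an indifference-switching argument. Pick any $a\in C_m\setminus C_w$ voting for some $c\in C_a$ with $c\neq m$, and consider replacing $a$'s action by a vote for $m$; let $w'$ be the outcome of the resulting subgame. The best-response condition on $a$ forces $\cl U_a(w')\leq \cl U_a(w)=0$, and since $m\in C_a$ we deduce $w'\neq m$ and $w'\notin C_a\cup\{a\}$, so $a$'s two candidate actions produce the same utility and the same truth-bias. Hence there is another SPE in which $a$ votes for $m$, whose winner is either $w$ itself (an indifferent switch) or some other achievable winner $w'\neq m$. Iterating these switches --- and, when a switch changes the winner, restarting the argument on the new winner --- the procedure terminates since the set $W$ of achievable agents is finite, and the terminal SPE satisfies $v_m\geq |C_m\setminus C_w|$. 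The main obstacle in this half is verifying this termination rigorously and controlling the $O(1)$ combinatorial slack (from $w$ possibly confirming $m$ and from tie-breaking) so that $\cl R(w)\leq 2$ really survives the asymptotic limit $\cl D\to\infty$ that defines $\underline{\cl R}$.

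For $\overline{\cl R}=\infty$, I would generalise Example~\ref{exm: first} by constructing, for each $k$, a graph $H_k$ whose intended winner $w$ has a single confirmer while the most popular agent $m$ is confirmed by $\Omega(k)$ pivotal voters $p_1,\dots,p_k$. The key design is a \emph{nested threat chain}: each $p_i$ also confirms a dedicated ``deadlock'' agent, arranged so that by backward induction a vote from $p_i$ for $m$ triggers the election of an agent $p_i$ does not confirm. Every $p_i$'s best response is therefore a vote \emph{away} from $m$, so $v_m$ remains bounded while $w$ is elected by tie-breaking with a single vote, yielding one SPE with $\cl R(w)=\Omega(k)$. The delicate part of the construction is engineering the threat chain so that the backward induction actually produces the intended SPE, and so that $\cl D(H_k)\to\infty$ places $H_k$ in $\cl G_k$, without inadvertently allowing some $p_i$ to get itself elected and collapse the chain.
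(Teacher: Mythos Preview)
Your plan for $\underline{\cl R}\le 2$ is aiming at the right inequality, but the indifference-switching detour is both unnecessary and not quite sound as written. The paper obtains the same conclusion in one stroke: fix a most popular agent $m$, and build an SPE by backward induction with the tie-breaking rule ``whenever an agent who confirms $m$ has `vote for $m$' among her best votes, she chooses it.'' This is a legitimate SPE selection, and in the resulting equilibrium every $a\in C_{m,\overline w}\setminus\{w\}$ automatically votes for $m$ (by Observation~\ref{obs:} such $a$ is truthful, all truthful ballots tie at utility $0+\epsilon^2$, and the rule picks $m$). Your iterative switching tries to reach the same endpoint but stumbles in two places. First, when a switch changes the outcome from $w$ to $w'$, the profile you describe need not remain an SPE: voters \emph{earlier} than $a$ now face a different continuation payoff on path while their deviation payoffs (which go through off-path histories where $a$'s action is unchanged) stay the same, so their best-response conditions can break. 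Second, even granting that each switch lands in some SPE, ``$W$ is finite'' does not rule out cycling among winners; you would need a potential that strictly improves at every restart. Both problems vanish if you simply construct the SPE with the $m$-favouring tie-break from the outset.

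For $\overline{\cl R}=\infty$ your intuition (a nested threat chain that deters each pivotal voter from supporting $m$) is exactly the mechanism the paper uses, but the sketch is too loose to be a proof. The paper's concrete gadget is worth noting because the nesting is essential: voters $d_1,\dots,d_k$ have \emph{increasing} confirmation sets $d_i\to\{c_1,\dots,c_i\}$ (plus a private $b_i$ for $i<k$), ordered so that $d_i$ votes before $d_{i+1}$ and the $c_j$'s are placed in the tie-break so that $c_k$ wins a $1$--$1$--$\cdots$--$1$ tie. The SPE has $d_i$ (for $i<k$) vote for $b_i$; the threat is that if $d_i$ instead supports any $c_j$ with $j\le i$, then $d_{i+1},\dots,d_k$ all pile onto $c_{i+1}$, which $d_i$ does \emph{not} confirm, leaving $d_i$ indifferent and hence free to waste her vote on $b_i$. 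This yields a winner $c_k$ with $d(c_k)=1$ while $d(c_1)=k$. Your description (``each $p_i$ also confirms a dedicated deadlock agent'') captures the $b_i$'s but not the nested $\{c_1,\dots,c_i\}$ structure, and without that nesting the threat that $d_{i+1},\dots,d_k$ can coordinate on a candidate \emph{just outside} $d_i$'s confirmation set does not go through.
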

\begin{proof}
	We shall first prove that $ \overline{\cl{R}} $ is unbounded. We show a series of graphs and SPEs, such that the ratio between the most popular agent and the winner goes to infinity. To this end, consider the graph in Figure~\ref{fig: plurality multiplicative bound}.
	Suppose the agents' order is: $ d_1,d_2,d_3, c_3, c_2, c_1, b_1 $. We claim that the following profile of strategies is an SPE.\footnote{Note that we only want to show that it is an SPE; we do not claim uniqueness here.}
	\begin{itemize}[nosep]
		\item Agent $ d_1 $: always vote for $ b_1 $.
		\item Agent $ d_2 $: always vote for $ c_2 $.
		\item Agent $ d_3 $: if $ d_1 $ voted for $ c_1 $, then vote for $ c_2 $. Otherwise, vote for $ c_3 $.
		\item The rest of the agents: be truthful (abstain).
	\end{itemize}
	\begin{figure}[h]
		\centering
		\includegraphics[width=0.4\textwidth]{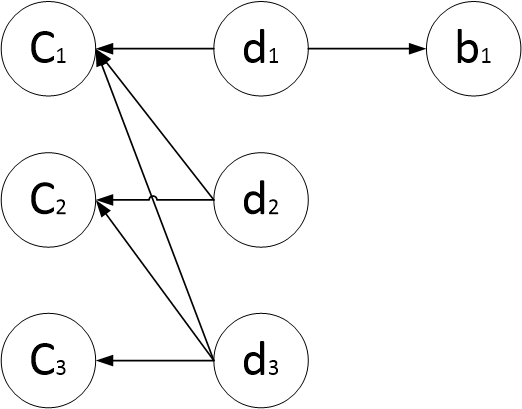}
		\caption{There is an SPE in which $ c_3 $ is elected.}
		\label{fig: plurality multiplicative bound}	
	\end{figure}
	To see that all the agents always act rationally, we start from the last voters and proceed backwards to the first. Agents $ c_1,c_2,c_3,b_1 $ confirm no one, thus, abstaining is always a best vote for them. Agent $ d_3 $ always gets an agent which he confirms elected, so his votes are best possible as well. Moving on to agent $ d_2 $, he will get $ c_2 $ elected when $ d_1 $ votes for $ c_1 $ and that is a best outcome for him. On the other hand, if $ d_1 $ votes for $ b_1 $, then $ d_2 $ knows that $ d_3 $ is about to vote for $ c_3 $, so the result will be bad for him no matter how he votes. The best thing he can do is to vote for someone he confirms (like $ c_2 $). Lastly, agent $ d_1 $ is indifferent between voting for $ b_1 $ and $ c_1 $ because anyhow the elected will be someone he does not confirm ($ c_3 $ in the former case and $ c_2 $ in the latter). Thus, assuming that he votes for $ b_1 $ is legitimate. \\
	This proves the existence of a graph and an SPE with a multiplicative ratio 3. Figure~\ref{fig: plurality multiplicative bound general} shows the general case. Here, there is an SPE in which $ d_1,\ldots,d_{k-1} $ vote for $ b_1,\ldots,b_{k-1} $, respectively; $ d_k $ then votes for $ c_k $, who is elected. If $ d_i $ decides to vote for any of $ c_1,\ldots,c_i $, then $ d_{i+1},\ldots, d_k $ all vote for $ c_{i+1} $, hence $ d_i $ gains nothing.  This is an SPE with a ratio of $ k $, which shows that $ \overline{\cl{R}} $ is unbounded.
	\begin{figure}[h]
		\centering
		\includegraphics[width=0.4\textwidth]{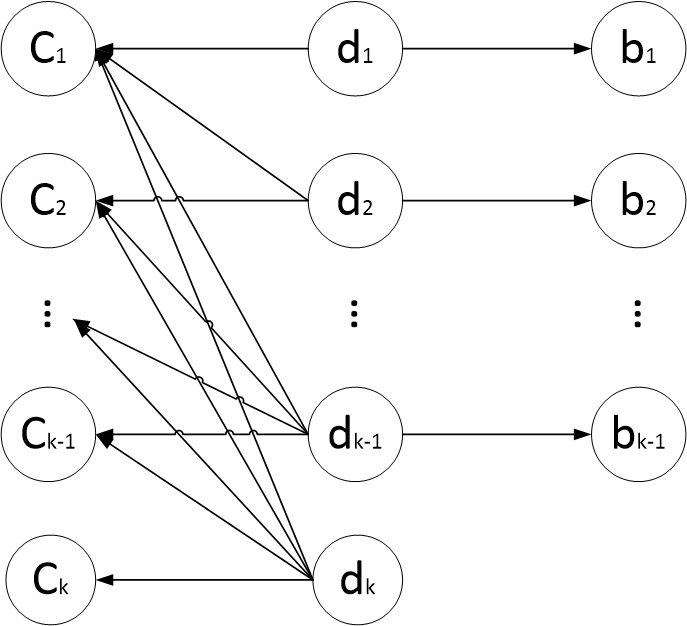}
		\caption{There is an SPE in which $ c_k $ is elected.}
		\label{fig: plurality multiplicative bound general}	
	\end{figure}
	
	In order to prove that $ \underline{\cl{R}}\leq 2 $, we need to show that there is always an SPE in which the winner's in-degree is at least half of $ \Delta_{in} $. Let $ G $ be any graph, and let $ m $ be one of the most popular agents. Assume that every agent who confirms $ m $ would vote for him whenever it is one of his best votes. Fix an SPE with this assumption, and let $ w\neq m $ be the winner. Notice first, that by Observation~\ref{obs:} $ w $ cannot get more than $ d(w) $ votes, since anyone who does not confirm him would not vote for him. Let $ C_{m,w} $ be the set of agents who confirm both $ m $ and $ w $, and let $ C_{m,\overline{w}} $ be the set of agents who confirm $ m $ and do not confirm $ w $. By Observation~\ref{obs:} and our assumption on the SPE, all the agents in $ C_{m,\overline{w}}\backslash\{w\} $ vote for $ m $, which means that $ |C_{m,\overline{w}}\backslash\{w\}|\leq d(w) $. In addition, $ |C_{m,w}|\leq d(w) $. Hence, we get that $ m $'s popularity in $ G_{-E_w} $ is at most $ d(m)= |C_{m,w}|+|C_{m,\overline{w}}\backslash\{w\}|\leq 2d(w) $, and the claim follows.
\end{proof}
In the next theorem, we prove finite bounds for both $ \underline{\cl{R}}$ and $\overline{\cl{R}} $ in approval voting.
\begin{thm}\label{thm: approval multiplicative bound}
	In approval voting, $\dfrac{3}{2}\leq\underline{\cl{R}}\leq\overline{\cl{R}}\leq 2 $.
\end{thm}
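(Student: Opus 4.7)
The middle inequality $\underline{\cl R}\le\overline{\cl R}$ is immediate from $\min\le\max$, so two separate bounds remain.

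For the upper bound $\overline{\cl R}\le 2$, my plan is to mimic the plurality argument of Theorem~\ref{thm: pluraliry multiplicative bound}, exploiting the stronger consequence of Observation~\ref{obs:} in approval voting: the truthful ballot of an agent is \emph{exactly} the set of agents he confirms. Fix any SPE with winner $w$ and let $m$ realize $\Delta_{in}(G_{-E_w})$. Write $C_{m,w}$ for the agents confirming both $m$ and $w$, and $C_{m,\bar w}$ for those confirming $m$ but not $w$. The observation gives two facts on the equilibrium path: (i) only agents in $C_w$ vote for $w$, so $w$ receives at most $d(w)$ votes; and (ii) every $a\in C_{m,\bar w}\setminus\{w\}$ votes truthfully, so he includes $m$ in his ballot and $m$ receives at least $|C_{m,\bar w}|-\mathbf{1}_{(w,m)\in E}$ votes. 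Since $w$ beats $m$ (with tie-breaking when equal), $|C_{m,\bar w}|-\mathbf{1}_{(w,m)\in E}\le d(w)$. Combining with the trivial $|C_{m,w}|\le|C_w|=d(w)$ yields $d(m)\le 2d(w)+\mathbf{1}_{(w,m)\in E}$, and hence $\Delta_{in}(G_{-E_w})=d(m)-\mathbf{1}_{(w,m)\in E}\le 2d(w)$, so $\cl{R}(w)\le 2$ for every SPE winner of every graph.

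For the lower bound $\underline{\cl R}\ge\tfrac{3}{2}$, one must exhibit a family $\{H_k\}$ with $\cl D(H_k)\to\infty$ in which \emph{every} SPE winner $w$ satisfies $\Delta_{in}(G_{-E_w})/d(w)\to 3/2$. Unlike Proposition~\ref{prp: max-deg 2}, where the additive loss was $k$ against a quadratic $\Delta_{in}=\Theta(k^2)$, here the loss must scale linearly with $\Delta_{in}$. I would adapt the ``threat-ladder'' architecture of Proposition~\ref{prp: max-deg 2} (a chain of candidates of decreasing popularity, each of which can seize the election by abstaining unless pre-empted by strategic voters earlier in the order), and re-engineer the exclusive-support sizes so that the fraction of $m$'s confirmers driven to approve \emph{only} a less-popular alternative is roughly $1/3$. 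Approval gives one an extra lever over plurality: a voter confirming both $m$ and a threat candidate may include or omit $m$ from an otherwise identical ballot, which is precisely what should enable a constant-factor (rather than vanishing) gap.

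The main obstacle is the lower-bound construction. Producing a graph in which \emph{some} SPE has poor ratio is comparatively easy; the hard part is proving that \emph{every} SPE yields a poor winner, which demands a complete backward-induction analysis of the extensive form together with careful use of the truth-bias tie-breaker at each history. Approval voting's much richer ballot space means many more candidate best-replies at every node, and one must simultaneously rule out the existence of any alternate SPE in which the popular agent $m$ is elected.
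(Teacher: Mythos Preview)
Your upper-bound argument is correct and is exactly the paper's approach: in approval voting Observation~\ref{obs:} forces every agent in $C_{m,\bar w}\setminus\{w\}$ to include $m$ in his ballot (no extra assumption on the SPE is needed, unlike in plurality), and the two inequalities $|C_{m,w}|\le d(w)$ and $|C_{m,\bar w}|-\mathbf{1}_{(w,m)\in E}\le d(w)$ combine to give $\Delta_{in}(G_{-E_w})\le 2d(w)$.

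The lower bound, however, is not proved in your proposal --- you only outline a strategy. Saying ``adapt the threat-ladder architecture of Proposition~\ref{prp: max-deg 2} and re-engineer the exclusive-support sizes'' is not a construction, and you explicitly flag the uniqueness analysis as ``the main obstacle'' without resolving it. The paper does the actual work: it builds an explicit family $\{H_k\}$ with a designated most-popular agent $m$ of in-degree $3(k-1)$ and candidates $c_1,\ldots,c_k$ each of in-degree $2(k-1)$, together with two auxiliary groups $D=\{d_1,\ldots,d_{k-1}\}$ (each $d_i$ confirms $m$ and $c_1,\ldots,c_i$) and $E=\{e_2,\ldots,e_k\}$ (each $e_i$ confirms $c_1,\ldots,c_i$ but \emph{not} $m$), plus ballast agents $B$ to equalize the $c_j$'s in-degrees. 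The voting order interleaves $c_1,d_1,e_2,c_2,d_2,e_3,\ldots$. The crucial mechanism is that if $d_1$ includes $m$ in his ballot, then $e_2$ (who does not confirm $m$) will drop $c_1$ and vote only for $c_2$, after which $c_2$ wins; to avoid this, $d_1$ must vote for $c_1$ alone. The paper proves by induction on $k$ that $c_1$ is the \emph{unique} SPE winner, via a case analysis on the ballots of $d_1$ and $e_2$ and a reduction of the resulting subgame to $H_{k-1}$.

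Nothing in your sketch supplies these ingredients: you neither specify the confirmation structure of $D$ and $E$ that creates the asymmetric threat (agents who confirm $c_j$'s but not $m$), nor the interleaved voting order, nor the inductive self-similarity that makes the uniqueness proof tractable. Without a concrete graph and a full backward-induction argument establishing that \emph{no} SPE elects $m$, the lower bound remains unproved.
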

\begin{proof}
	The proof of the upper bound on $ \overline{\cl{R}} $ is similar to the proof of the upper bound on $ \underline{\cl{R}} $ in Theorem~\ref{thm: pluraliry multiplicative bound}. This time we do not need to make any assumption on the SPE --- by Observation~\ref{obs:} any voter who confirms the most popular agent and does not confirm the winner is voting for the most popular agent. The claim now follows in a similar way.\\	
	To show the lower bound, we construct a series of graphs $ \{H_k\}_{k\geq 2} $ where $ \Delta_{in}(H_k)=\Theta(k) $ and $ \underline{\cl{R}}(H_k)=3/2 $. In the graph $ H_k $ the agent $ m $ is the most popular, and there are four sets of additional agents:
	\begin{itemize}[wide, labelwidth=!, labelindent=0pt,nosep]	
		\item The $ k $ agents in $ C=\{c_1,\ldots, c_k \} $ are the only agents, besides $ m $, with a positive in-degree. They all have precisely $ k $ in-edges less than $ m $, and all confirm only $ m $. We will show that $ c_1 $ is the winner in the unique SPE.
		\item The $ k-1 $ agents in $ D=\{d_1,\ldots,d_{k-1}\} $ are those who confirm $ m $ but are forced not to vote for him. For any $ 1\leq i\leq k-1 $, agent $ d_i $ confirms $ m $ and all the agents $ \{c_j:j\leq i\} $.
		\item The $ k-1 $ agents in $ E=\{e_2,\ldots,e_{k}\} $ provide the threat which prevents agents of $ D $ from voting for $ m $. Agent $ e_i $, $ 2\leq i\leq k $, confirms all the agents $ \{c_j: j\leq i\} $.
		\item Finally, the set $ B $ contains agents of $ k $ different types which serve as `equalizers' which ensure that all nodes in $ C $ have a popularity of $ d(m)-k=\Delta_{in}(G_k)-k $. For $ 2\leq i\leq k $ there are $ 2i-3 $ agents of type $ b_i $ and they only confirm $ c_i $. In addition, there are $ k-1 $ agents of type $ b_m $ who confirm $ m $.
	\end{itemize}
	\begin{figure}[h]
		\centering
		\includegraphics[width=0.5\textwidth]{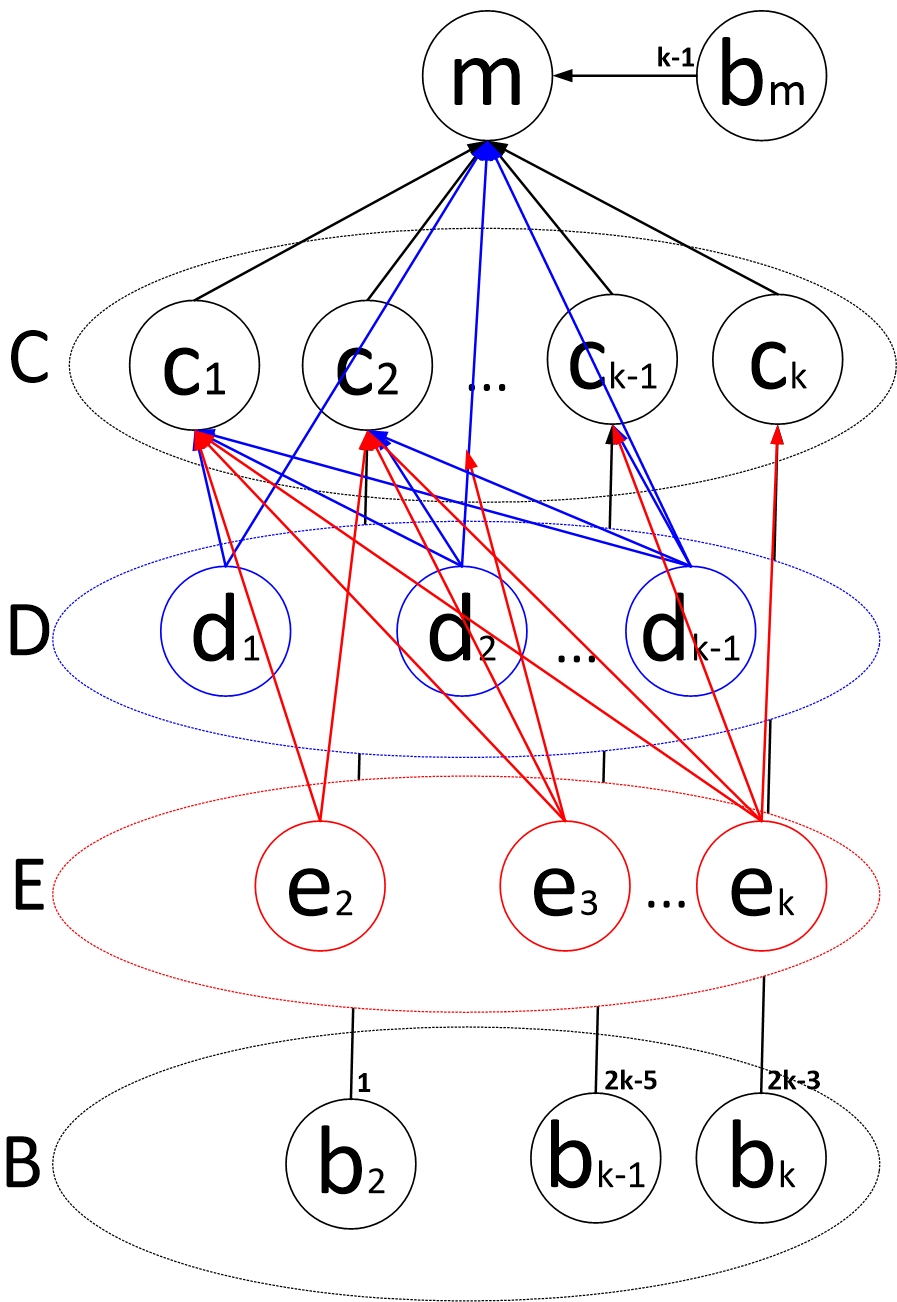}
		\caption{The graph $ H_k $. Agent $ m $ has a popularity lead of $ k $ over all other agents, yet agent $ c_1 $ is elected.}\label{fig: approval k-lead}	
	\end{figure}	
	The general graph is represented in Figure~\ref{fig: approval k-lead}. The agents in $ C,D $ and $ E $ are ordered alternately in lexicographic  order:
	$ c_1,d_1, e_2,c_2,d_2, e_3,\ldots,c_{k-1},d_{k-1}, e_{k},c_k $; the agents in $ B\cup \{m\} $ are ordered after them in arbitrary order. \\
	We will prove by induction on $ k $ that the winner in the unique SPE is agent $ c_1 $. Notice that the popularity of $ m $ in $ G_{-E_{c_1}} $ is $ 3(k-1) $ while the popularity of $ c_1 $ is $ 2(k-1) $, which implies the claimed ratio. Our induction base is $ k=2 $ (Figure~\ref{fig: approval 2-lead}). Here is a sketch of the unique SPE in this scenario. The voting starts with $ c_1 $ who abstains. If $ d_1 $ and $ e_2 $ are both truthful, then $ c_2 $ will be truthful as well (since $ c_1 $ beats him anyhow), and $ m $ will be elected. As $ e_2 $ does not confirm this result, he will vote only for $ c_2 $, who can now abstain and get elected. Agent $ d_1 $, foreseeing this possibility, must vote only for $ c_1 $. Everyone after $ d_1 $ will now be truthful and $ c_1,c_2 $ and $ m $ all end up with two votes, leading to the election of $ c_1 $ by tie-breaking. 
	\begin{figure}[h]
		\centering
		\includegraphics[width=0.4\textwidth]{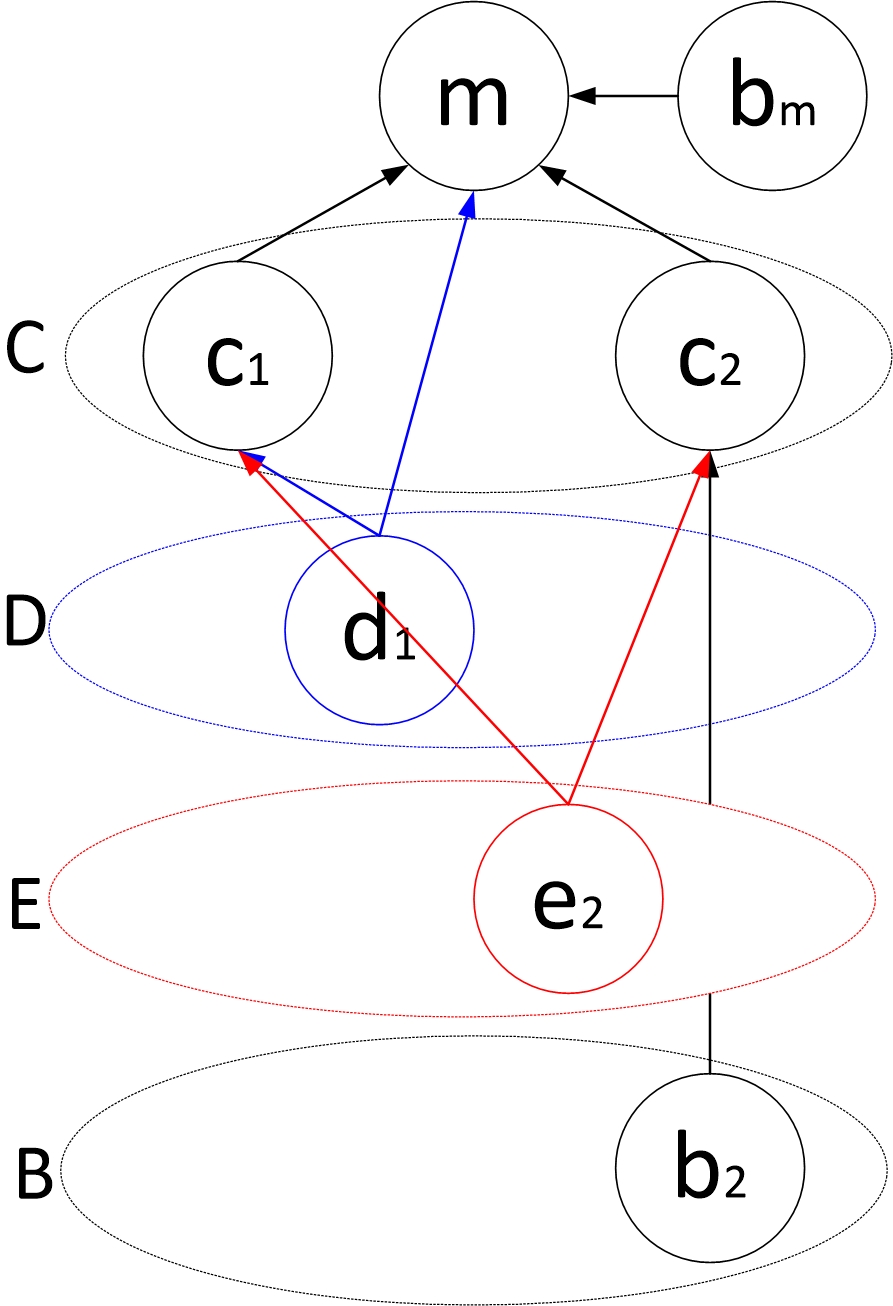}
		\caption{The graph $ H_2 $. Agent $ c_1 $ gets elected after abstaining. }\label{fig: approval 2-lead}	
	\end{figure}				
	For a general $ k $ it is enough to show that the following holds after $ c_1 $ abstains:
	\begin{enumerate}[label=(\arabic*.)]
		\item If $ d_1, e_2 $ are both truthful, then the winner is $ m $.
		\item If $ d_1 $ is truthful and $ e_2 $ votes for $ c_2 $, then $ c_2 $ is the winner.
		\item If $ d_1 $ votes for $ c_1 $ and $ e_2 $ is truthful, then $ c_1 $ wins.
	\end{enumerate}
	Notice that in any case, agents in $ B\cup D\cup E $ are not achievable since they have no in-edges.
	\begin{enumerate}[label=\textit{Proof of (\arabic*.)},wide]
		\item After truthful votes from $ d_1 $ and $ e_2 $, $ c_1 $ and $ c_2 $ have equal potentials. Moreover, every agent who confirms $ c_2 $, except for $ b_2 $, also confirms $ c_1 $. Since in the tie-breaking order $ c_1 $ and $ c_2 $ are adjacent, it is not hard to see that due to the truthful bias assumption, in any SPE, any agent except $ b_2 $ must either include both $ c_1,c_2 $ or neither. This means that in any SPE, $ c_1 $ will precede $ c_2 $ by tie-braking, hence $ c_2 $ is not achievable. It is enough, therefore, to show that if $ c_2 $ and $ d_2 $ are truthful, then $ m $ is the only achievable agent. Indeed, suppose that there is an SPE in which the outcome is $ c_i $, $ 3\leq i\leq k $.\footnote{Clearly, an agent from $ D\cup E\cup B $ cannot get elected, since $ m $ already has three votes.} By Observation~\ref{obs:}, all the agents in $ (C\backslash\{c_1,c_2,c_i\})\cup b_m  $ will be truthful, which means that $ m $ will end up with at least $ 2k-1 $ votes,\footnote{Which are the votes of $ d_1,d_2 $ and $ C\cup b_m\backslash\{c_1,c_i\} $.} while the potential of $ c_i $ is $ 2k-2 $, which is a contradiction.
		\item Notice that by an argument similar to that of (1), here $ c_1 $ is not achievable because $ c_2 $ has higher potential.	We claim that the subgame which starts with the vote of $ c_2 $ is equivalent to the voting (which starts with $ c_1 $) in the graph $ H_{k-1} $. More precisely, we define a mapping $ \varphi:A(H_{k})\backslash (B\cup\{c_1,d_1,e_1\})\to A(H_{k-1})\backslash B $, by $ \varphi(c_j)=c_{j-1} $, $ \varphi(d_j)=d_{j-1} $, $ \varphi(e_j)=e_{j-1} $, $ \varphi(m)=m $. Since the agents of $ B $ vote last and support only one agent each, they are non-strategic.\footnote{That is, their only best vote is to be truthful.} We thus treat these agents as part of the potential of the agents in $ C $ and ignore them in the description of an SPE. Now fix some SPE of the subgame which starts with the voting of $ c_2 $, and denote by $ V(a)\subseteq A(H_k) $ the ballot of agent $ a $ in this SPE. We claim that the profile of voting in $ H_{k-1} $ in which each agent $ \varphi(a) $ vote for $ \varphi(V(a))  $ is an SPE. Indeed, the only difference between the two voting scenarios is that the potential of every agent $ c\in C(H_k)\backslash\{c_1\} $ is higher by exactly two then the potential of $ \varphi(c) $ in $ H_{k-1} $.\footnote{Here the potential includes the ``sure votes'' of agents in $ B $.} Since the graph structure, voting order and tie-braking are all the same, it is not possible that $ \varphi(V(a)) $ is not a best vote of agent $ \varphi(a) $ in $ H_{k-1} $. By the induction hypothesis, we know that the only SPE of $ H_{k-1} $ leads to the election of $ c_1 $. Hence the only achievable agent in our subgame of $ H_{k} $ is $ c_2 $, as claimed.
		\item Just like in (1), $ c_2 $ is not achievable, hence $ c_2 $ is truthful. Again we get a subgame which is equivalent to $ H_{k-1} $, only this time we ignore $ c_2 $ and map $ \varphi(c_1)=c_1 $. The rest of the claims are identical to the proof of (2). We get that the only achievable agent in this subgame is $ c_1 $.
	\end{enumerate}
\end{proof}

\section{Generalizing to {\boldmath $ k $}-approval}\label{sec: k-approval}
The two voting methods we discussed (namely, plurality and approval) can be generalized to a $ k $-approval voting method in which every voter is allowed to vote for \emph{at most} $ k $ other agents.\footnote{We allow a voter to vote for less than $ k $ other agents and even abstain. Though this is not the standard definition of $ k $-approval, since we allowed abstentions in plurality and approval, this is the correct definition to get the full range between plurality and approval with abstentions.} So plurality is no more than 1-approval, and approval is the same as $ (n-1) $-approval. The two propositions of the additive gap (Propositions~\ref{prp: 1-regular} and \ref{prp: max-deg 2}) had a single proof for both plurality and approval, and it is not hard to see that it can be generalized for any $ k $-approval. Likewise, the bound $ \underline{\cl{R}}\leq 2 $ can be proved for any $ k $-approval in a similar way to the proof in Theorem~\ref{thm: pluraliry multiplicative bound} (and see also the same argument in Theorem~\ref{thm: approval multiplicative bound}). We will now extend the second part of Theorem~\ref{thm: pluraliry multiplicative bound} to any $ k $-approval with $ k=o(n) $ by showing a series of graphs in which $ \overline{\cl{R}} $ is unbounded. 

\begin{figure}[h]
	\centering
	\includegraphics[width={0.4\textwidth}]{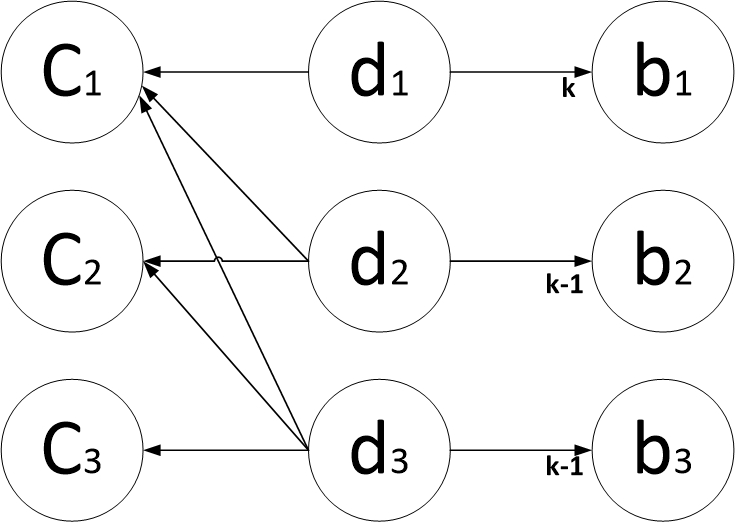}
	\caption{There is an SPE in which $ c_3 $ is elected. }
	\label{fig: k-approval multiplicative ratio 3}	
\end{figure}

In the graph in Figure~\ref{fig: k-approval multiplicative ratio 3} each agent $ d_i $, $ i=1,2,3 $ confirms the agents $ c_j $, $ 1\leq j\leq i $. In addition, agent $ d_1 $ confirms $ k $ additional agents, denoted $ b_1 $, and $ d_2,d_3 $ each confirms $ k-1 $ additional agents, denoted $ b_2,b_3 $, respectively. The agents' order is $ d_1,d_2,d_3, c_3, c_2, c_1 $ and then the rest. We describe an SPE in which $ c_3 $ is elected. In this SPE, if $ d_1 $ votes for any subset which includes $ c_1 $, then $ d_2 $ and $ d_3 $ will both vote for a subset which includes $ c_2 $, and $ c_2 $ is elected. Since the outcome is one of the worst $ d_1 $ can get, voting only for the $ k $ agents of $ b_1 $ is a best vote for him. In this case, $ d_3 $ decides to vote for $ c_3 $ and the $ k-1 $ agents of $ b_3 $, and $ c_3 $ is elected no matter how $ d_2 $ votes. The multiplicative ratio in this SPE is 3. It is not hard to see how to extend this example to get any ratio. 

\section{Discussion and open problems}\label{sec: discussion}

\textbf{Additive gap vs. multiplicative ratio}\\

We have seen (Proposition~\ref{prp: 1-regular}) that in the special case where every agent confirms at most one other agent, the elected agent will be most-popular or almost most-popular. However, as soon as the maximum out-degree of the graph is higher than one, this is no longer the case. In fact, we have shown (Proposition~\ref{prp: max-deg 2} and the discussion in Section~\ref{sec: k-approval}) that $ {\cl{D}} $ is unbounded even when $ \Delta_{in}(G)=2 $, for any $ k $-approval voting. We, therefore, come to the conclusion that the additive gap is not a sufficient measure for the quality of a voting method in this model. Thus we turned to the multiplicative ratio for a finer evaluation. 

Indeed the multiplicative ratio gave us different bounds for plurality and approval voting. In the case of plurality, we have seen (Theorem~\ref{thm: pluraliry multiplicative bound}) that even though there might be an SPE with a `bad' outcome ($ \overline{\cl{R}} $ is unbounded), for every graph, we are guaranteed to have an SPE with a ratio of 2, at most. Moreover, our proof explains how to distinguish this SPE from other SPEs: you just give a small extra incentive for those who confirm the most popular to actually vote for him. The case of approval voting is clearer. Here (Theorem~\ref{thm: approval multiplicative bound}) we have proved finite bounds for both $ \underline{\cl{R}} $ and $ \overline{\cl{R}} $. \\

\noindent \textbf{Plurality vs. approval}\\

In \cite{B08}, Brams demonstrated the superiority of approval voting over plurality voting in simultaneous voting systems. We conclude from our results, that in our setting, plurality voting (and even $ k $-approval voting for any $ k=o(n) $) allows SPEs with unbounded multiplicative ratio, while in approval, this ratio, in any SPE, will be between 1.5 and 2. We cannot draw from our results a comparison of the `best outcome'. To achieve that, we need to bound $ \underline{\cl{R}} $ from below for plurality voting.
\begin{qst}
	In plurality voting, is it possible to construct     a series of graphs, $ \{G_k\} $, with $ \Delta_{in}(G_k)\to\infty $ and $ \underline{\cl{R}}(G_k)\geq \alpha $ for some $ \alpha>1 $ and all $ k $?
\end{qst}
Notice that in the series of graphs of Proposition~\ref{prp: max-deg 2}, when $ {\cl{D}}(G_k)=k $, we have $ \Delta_{in}(G_k)=\Theta(k^2) $. So in this particular example, $ \underline{\cl{R}}(G_k)=\overline{\cl{R}}(G_k)=1+O(1/k) $. If there is a non-trivial (i.e. different than 1) bound for $ \underline{\cl{R}} $ in plurality voting, then there is a series of graphs, $ \{G_k\} $ such that $ {\cl{D}}(G_k)=k $ and $ \Delta_{in}(G_k)=\Theta(k) $ (this is exactly what we have shown for approval voting, when we proved the lower bound in Theorem~\ref{thm: approval multiplicative bound}). So a rephrase of the above question would be:
\begin{qst}
	In plurality voting, is it possible to construct a series of graphs, $ \{G_k\} $, with $ \Delta_{in}(G_k)=\Theta(k) $ and such that $ {\cl{D}}(G_k)\to\infty $?
\end{qst}

It is worthwhile to note here that the example giving the lower bound of Theorem~\ref{thm: approval multiplicative bound} does not work for plurality. To see that, consider the graph $ H_2 $ (Figure~\ref{fig: approval 2-lead}). If $ c_1 $ abstains, then even if $ d_1 $ votes for $ c_1 $, $ e_1 $ might opt to vote for $ c_2 $, and as a result, $ c_2 $ will be elected. So, for the case of plurality voting, the graph $ H_2 $ has an SPE in which $ m $ is elected, and the proof fails.\\

For the approval voting method, we have proved both a lower and upper bound on $ \underline{\cl{R}}, \overline{\cl{R}} $. Still, it could be nice to further narrow these bounds or even find the exact asymptotic values of $ \underline{\cl{R}}, \overline{\cl{R}} $. 

\begin{qst}
	Can the bounds of Theorem~\ref{thm: approval multiplicative bound} be narrowed down?
\end{qst}

\quad\newline
\noindent \textbf{$ \bm{k} $-approval and a threshold between plurality and approval}\\

Finally, we have seen that $ k $-approval has the same bounds as plurality for any $ k=o(n) $. When $ k=n-1 $ this voting method is precisely approval; and so a natural question is what can be said about the threshold function which separates $ k $-approval from plurality.
\begin{qst}
	Find a minimal function, $ f(n) $, such that the voting method $ f(n) $-approval has a finite bound for $ \overline{\cl{R}} $.
\end{qst}

\nocite{*}
\bibliographystyle{eptcs}
\bibliography{paper}
\end{document}